\def\BibTeX{{\rm B\kern-.05em{\sc i\kern-.025em b}\kern-.08em
    T\kern-.1667em\lower.7ex\hbox{E}\kern-.125emX}}
\theoremstyle{plain}% Theorem-like structures provided by amsthm.sty
\newtheorem{theorem}{Theorem}[section]
\newtheorem{assumption}{Assumption}[section]
\theoremstyle{definition}
\theoremstyle{remark}
\newtheorem{remark}{Remark}
\begin{document}

\title{Cooperative control of multi-agent systems to locate source of an odor \\
\thanks{A. Sinha is with School of Mechatronics \& Robotics, Indian Institute of Engineering Science and Technology; and Central Scientific Instruments Organization (CSIR- CSIO), India.\newline\textbf{email: }\texttt{sinha.abhinav.pg2016@mechatronics.iiests.ac.in}\newline R. Kaur, R. Kumar \& A. P. Bhondekar are with CSIR- CSIO. \newline\textbf{emails: }\texttt{rishemjit.kaur@csio.res.in,riteshkr@csio.res.in,\newline amolbhondekar@csio.res.in}}
}

\author{Abhinav Sinha, Rishemjit Kaur, Ritesh Kumar and Amol P. Bhondekar}

\maketitle

\begin{abstract}
This work targets the problem of odor source localization by multi-agent systems. A hierarchical cooperative control has been put forward to solve the problem of locating source of an odor by driving the agents in consensus when at least one agent obtains information about location of the source. Synthesis of the proposed controller has been carried out in a hierarchical manner of group decision making, path planning and control. Decision making utilizes information of the agents using conventional Particle Swarm Algorithm and information of the movement of filaments to predict the location of the odor source. The predicted source location in the decision level is then utilized to map a trajectory and pass that information to the control level. The distributed control layer uses sliding mode controllers known for their inherent robustness and the ability to reject matched disturbances completely. Two cases of movement of agents towards the source, i.e., under consensus and formation have been discussed herein. Finally, numerical simulations demonstrate the efficacy of the proposed hierarchical distributed control.
\end{abstract}

\begin{IEEEkeywords}
Odor source localization, multi-agent systems (MAS), sliding mode control (SMC), homogeneous agents, cooperative control.
\end{IEEEkeywords}

\section{Introduction}
\subsection{Overview}
Inspiration of odor source localization problem stems from behavior of biological entities such as mate seeking by moths, foraging by lobsters, prey tracking by mosquitoes and blue crabs, etc., and is aimed at locating the source of a volatile chemical. These behaviors have long been mimicked by autonomous robot(s). Chemical source tracking has attracted researchers around the globe due to its applications in both civilian and military domains. A plethora of applications are possible, some of which include detection of forest fire, oil spills, release of toxic gases in tunnels and mines, gas leaks in industrial setup, search and rescue of victims and clearing leftover mine after an armed conflict.
A plume containing filaments, or odor molecules, is generally referred to the downwind trail formed as a consequence of mixing of contaminant molecules in any kind of movement of air. The dynamical optimization problem of odor source localization can be effectively solved using multiple robots working in cooperation. The obvious advantages of leveraging multi-agent systems (MAS) are increased probability of success, redundancy and improved overall operational efficiency and spatial diversity in having distributed sensing and actuation. 
\subsection{Motivation}
Odor source localization is a three stage problem-- sensing, maneuvering and control. Some of reported literature on odor source localization date back to 1980s when Larcombe et al. \cite{Larcombe1984} discussed such applications in nuclear industry by considering a chemical gradient based approach. Other works in 1990s \cite{240354,594225,Buscemi,414920,Russell:2000:ODM:518667} relied heavily on sensing part using techniques such as chemotaxis \cite{RUSSELL200383}, infotaxis \cite{vergassola:hal-00326807}, anemotaxis \cite{1245262,1703649} and fluxotaxis \cite{Zarzhitsky:2008:PAC:1571283}. The efficiency of such algorithms was limited by the quality of sensors and the manner in which they were used. These techniques also failed to consider turbulence dominated flow and resulted in poor tracking performance.\medskip

\noindent Bio-inspired algorithms have been reported to maneuver the agents, some of which include Braitenberg style \cite{braitenberg}, E. coli algorithm \cite{Lytridis:2001:ONS:770673.770677}, Zigzag dung beetle approach \cite{ISHIDA1994153}, silkworm moth style \cite{russell_silkworm,862824,MARQUES200251} and their variants. A tremendous growth of research attention towards cooperative control has been witnessed in the past decade \cite{1431045,5711689} but very few have addressed the problem of locating source of an odor. Hayes et al. \cite{Hayes:2003:SRO:976910.976917} proposed a distributed cooperative algorithm based on swarm intelligence for odor source localization and experimental results proved multiple robots perform more efficiently than a single autonomous robot. A Particle Swarm Optimization (PSO) algorithm \cite{488968} was proposed by Marques et al. \cite{Marques2006,4168420} to tackle odor source localization problems. To avoid trapping into local maximum concentrations, Jatmiko et al. \cite{4168420} proposed modified PSO algorithms based on electrical charge theory, where neutral and charged robots has been used. Lu et al. \cite{5586502} proposed a distributed coordination control protocol based on PSO to address the problem. It should be noted that simplified PSO controllers are a type of proportional-only controller and the operating region gets limited between global and local best. This needs complicated obstacle avoidance algorithms and results in high energy expenditure. Lu et al. \cite{6119291} also proposed a cooperative control scheme to coordinate multiple robots to locate odor source in which a particle filter has been used to estimate the location of odor source based on wind information, a movement trajectory has been planned, and finally a cooperative control scheme has been proposed to coordinate movement of robots towards the source. \medskip

\noindent Motivated by these studies, we have implemented a robust and powerful hierarchical cooperative control strategy to tackle the problem. First layer is the group level in which the information about the source via instantaneous sensing and swarm intelligence is obtained. Second layer is designed to maneuver the agents via a simplified silkworm moth algorithm. Third layer is based on cooperative sliding mode control and the information obtained in the first layer is passed to the third layer as a reference to the tracking controller.
\subsection{Contributions}
Major contributions of this paper are summarized below.
\begin{enumerate}
\item As opposed to existing works on cooperative control to locate source of odor, we have considered a more general formulation by taking nonlinear dynamics of MAS into account. When the uncertain function is zero, the problem reduces to stabilizing integrator dynamics.\\
\item The control layer is designed on the paradigms of sliding mode, a robust and powerful control with inherent robustness and disturbance rejection capabilities. The reaching law, as well as the sliding manifold in this study are nonlinear and novel resulting in smoother control and faster reachability to the manifold. Use of sliding mode controller also helps in achieving a finite time convergence as opposed to asymptotic convergence to the equilibrium point. The proposed control provides stability and ensures robustness even in the presence of bounded disturbances and matched uncertainties.\\
\item Odor propagation is non-trivial, i.e., odor arrives in packets, leading to wide fluctuations in measured concentrations. Plumes are also dynamic and turbulent. As odor tends to travel downwind, direction of the wind provides an effective information on relative position of the source. Hence, we have used wind information based on a measurement model describing movement of filaments and concentration information from swarm intelligence to locate the source of odor.\\
\item Formation keeping of agents to locate source of odor has also been demonstrated in this work.
\end{enumerate}
\subsection{Paper Organization}
After introduction to the study in section I, remainder of this work in organized as follows. Section II provides insights into preliminaries of spectral graph theory and sliding mode control. Section III presents dynamics of MAS and mathematical problem formulation, followed by hierarchical distributed cooperative control scheme in section IV. Results and discussions have been carried out in section V, followed by concluding remarks in section VI.

\section{Preliminaries}

\subsection{Spectral Graph Theory for Multi-Agent Systems}
A directed graph, also known as digraph is represented throughout in this paper by $\mathcal{G} = (\mathcal{V,E,A})$. $\mathcal{V}$ is the nonempty set in which finite number of vertices or nodes are contained such that $\mathcal{V} = \{1, 2, ..., N\}$. $\mathcal{E}$ denotes directed edge and is represented as $\mathcal{E} = \{(i,j)\hspace{1mm} \forall \hspace{1mm} i,j \in \mathcal{V} \hspace{1mm}\&\hspace{1mm} i\neq j\}$. $\mathcal{A}$ is the weighted adjacency matrix such that $\mathcal{A} = a(i,j) \in \mathbb{R}\textsuperscript{N$\times$N}$.

The possibility of existence of an edge $(i,j)$ occurs iff the vertex $i$ receives the information supplied by the vertex $j$, i.e., $(i,j) \in \mathcal{E}$. Hence, $i$ and $j$ are termed neighbours. The set $\mathcal{N}_i$ contains labels of vertices that are neighbours of the vertex $i$. For the adjacency matrix $\mathcal{A}$, $a(i,j) \in \mathbb{R}^+_0$. If $(i,j) \in \mathcal{E} \Rightarrow a(i,j)>0$. If $(i,j) \notin \mathcal{E}$ or $i = j \Rightarrow a(i,j)=0$.

The Laplacian matrix $\mathcal{L}$ \cite{chung} is central to the consensus problem and is given by $\mathcal{L} = \mathcal{D-A}$ where degree matrix, $\mathcal{D}$ is a diagonal matrix, i.e, $\mathcal{D}$ = diag($d_1,d_2,...,d_n$) whose entries are $d_i = \sum_{j=1}^{n} a(i,j)$. A directed path from vertex $j$ to vertex $i$ defines a sequence comprising of edges $(i,i_1), (i_1,i_2), ..., (i_l,j)$ with distinct vertices $i_k \in \mathcal{V}$, $k = 1, 2, 3, ..., l$. Incidence matrix $\mathcal{B}$ is also a diagonal matrix with entries $1$ or $0$. The entry is $1$ if there exists an edge between leader agent and any other agent, otherwise it is $0$. Furthermore, it can be inferred that the path between two distinct vertices is not uniquely determined. However, if a distinct node in $\mathcal{V}$ contains directed path to every other distinct node in $\mathcal{V}$, then the directed graph $\mathcal{G}$ is said to have a spanning tree. Consequently,the matrix $\mathcal{L} + \mathcal{B}$ has full rank \cite{chung}. Physically, each agent has been modelled by a vertex or node and the line of communication between any two agents has been modelled as a directed edge.

\subsection{Sliding Mode Control}
Sliding Mode Control (SMC) \cite{utkin} is known for its inherent robustness. The switching nature of the control is used to nullify bounded disturbances and matched uncertainties. Switching happens about a hypergeometric manifold in state space known as sliding manifold, surface, or hyperplane. The control drives the system monotonically towards the sliding surface, i.e, trajectories emanate and move towards the hyperplane (reaching phase). System trajectories, after reaching the hyperplane, get constrained there for all future time (sliding phase), thereby ensuring the system dynamics remains independent of bounded disturbances and matched uncertainties.

In order to push state trajectories onto the surface $s(x)$, a proper discontinuous control effort $u_{\textsc{SM}}(t,x)$ needs to be synthesized satisfying the following inequality.
\begin{equation}
s^T(x) \dot{s}(x) \leq -\eta \|s(x)\|,
\end{equation}
with $\eta$ being positive and is referred as the reachability constant.
\begin{equation}
\because \hspace{4mm} \dot{s}(x) = \frac{\partial s}{\partial x} \dot{x} = \frac{\partial s}{\partial x} f(t,x,u_{\textsc{SM}})
\end{equation}
\begin{equation}
\therefore \hspace{4mm} s^T(x) \frac{\partial s}{\partial x} f(t,x,u_{\textsc{SM}}) \leq -\eta \|s(x)\|.
\end{equation}
The motion of state trajectories confined on the manifold is known as \emph{sliding}. Sliding mode exists if the state velocity vectors are directed towards the manifold in its neighbourhood. Under such consideration, the manifold is called attractive, i.e., trajectories starting on it remain there for all future time and trajectories starting outside it tend to it in an asymptotic manner. Hence, in sliding motion,
\begin{equation}
\dot{s}(x) = \frac{\partial s}{\partial x} f(t,x,u_{\textsc{SM}}) = 0.
\end{equation}
$u_{\textsc{SM}} = u_{eq}$ is a solution, generally referred as equivalent control is not the actual control applied to the system but can be thought of as a control that must be applied on an average to maintain sliding motion and is mainly used for analysis of sliding motion. 

\section{Dynamics of Multi-Agent Systems \& Problem Formulation}
Consider first order homogeneous MAS interacting among themselves and their environment in a directed topology. Under such interconnection, information about the predicted location of source of the odor through instantaneous plume sensing is not available globally. However, local information is obtained by communication among agents whenever at least one agent attains some information of interest. The governing dynamics of first order homogeneous MAS consisting of $N$ agents is described by nonlinear differential equations as
\begin{equation}\label{eq:dynamics}
\dot{x}_i(t) = f(x_i(t)) + u_{\textsc{SM}_i}(t) + \varsigma_i;\hspace{2mm} i \in [1,N],
\end{equation}
where $f(\cdot):\mathbb{R}\textsuperscript{+} \times X \rightarrow \mathbb{R}\textsuperscript{m}$ is assumed to be locally Lipschitz over some fairly large domain $\mathbb{D_L}$ with Lipschitz constant $\bar{L}$, and denotes the uncertain nonlinear dynamics of each agent. Also $X \subset \mathbb{R}\textsuperscript{m}$ is a domain in which origin is contained. $x_i$ and $u_{\textsc{SM}_i}$ are the state of $i$\textsuperscript{th} agent and the associated control respectively. $\varsigma_i$ represents bounded exogenous disturbances that enter the system from input channel, i.e., $\|\varsigma_i\| \leq \varsigma_{max} < \infty$.

The problem of odor source localization can be viewed as a cooperative control problem in which control laws $u_{\textsc{SM}_i}$ need to be designed such that the conditions $\lim_{t\to\infty}\| x_i - x_j\| = 0$ and $\lim_{t\to\infty} \| x_i - x_s\|\leq \theta$ are satisfied. Here $x_s$ represents the probable location of odor source \& $\theta$ is an accuracy parameter.

\section{Hierarchical Distributed Cooperative Control Scheme}
In order to drive the agents towards consensus to locate the source of odor, we propose the following hierarchy.
\subsection{Group Decision Making}
This layer utilizes both concentration and wind information to predict the location of odor source. Then, the final probable position of the source can be described as
\begin{equation}\label{eq:xpp}
    \psi(t_k) = c_1 p_i(t_k) + (1 - c_1) q_i(t_k),
\end{equation}
with $p_i(t_k)$ as the oscillation centre according to a simple Particle Swarm Optimization (PSO) algorithm and $q_i(t_k)$ captures the information of the wind. $c_1 \in (0,1)$ denotes additional weighting coefficient.
\begin{remark}\label{sensorDiscreteInstant}
The arguments in {\normalfont (\ref{eq:xpp})} represent data captured at $t=t_k$ instants ($k=1,2,...$) as the sensors equipped with the agents can only receive data at discrete instants.
\end{remark}
It should be noted that $\psi$ is the tracking reference that is fed to the controller. Now, we present detailed description of obtaining $p_i(t_k)$ and $q_i(t_k)$.

Simple PSO algorithm that is commonly used in practice has the following form.
\begin{align}
    v_i(t_{k+1})=\omega v_i(t_k) + u_{\textsc{PSO}}(t_k), \label{vpso}\\
    x_i(t_{k+1})=x_i(t_k) + v_i(t_{k+1}). \label{xpso}
\end{align}
Here $\omega$ is the inertia factor, $v_i(t_k)$ and $x_i(t_k)$ represent the respective velocity and position of $i^{th}$ agent. This commonly used form of PSO can also be used as a proportional-only type controller, however for the disadvantages mentioned earlier, we do not use PSO as our final controller. PSO control law $u_{\textsc{PSO}}$ can be described as
\begin{equation} \label{upso}
    u_{\textsc{PSO}} = \alpha_1 (x_l(t_k) - x_i(t_k)) + \alpha_2 (x_g(t_k) - x_i(t_k)).
\end{equation}
In (\ref{upso}), $x_l(t_k)$ denotes the previous best position and $x_g(t_k)$ denotes the global best position of neighbours of $i^{th}$ agent at time $t=t_k$, and $\alpha_1$ \& $\alpha_2$ are acceleration coefficients. Since, every agent in MAS can get some information about the magnitude of concentration via local communication, position of the agent with a global best can be easily known. By the idea of PSO, we can compute the oscillation centre $p_i(t_k)$ as
\begin{equation}\label{oscillation centre}
    p_i(t_k) = \frac{\alpha_1 x_l(t_k) + \alpha_2 x_g(t_k)}{\alpha_1 + \alpha_2},
\end{equation}
where
\begin{align}
    x_l(t_k) &= \arg \max_{0<t<t_{k-1}} \{ g(x_l(t_{k-1})), g(x_i(t_k)) \}, \\
    x_g(t_k) &= \arg \max_{0<t<t_{k-1}} \{ g(x_g(t_{k-1})), \max_{j\in N} a_{ij}\hspace{1mm} g(x_j(t_k)) \}.
\end{align}
Thus, from (\ref{upso}), (\ref{oscillation centre})
\begin{equation}
    u_{\textsc{PSO}}(t_k) = (\alpha_1 + \alpha_2)\{p_i(t_k)-x_i(t_k)\},
\end{equation}
which is clearly a proportional-only controller with proportional gain $\alpha_1 + \alpha_2$, as highlighted earlier.

In order to compute $q_i(t_k)$, movement process of a single filament that consists several order molecules has been modelled. If $x_f(t)$ denotes position of the filament at time $t$, $\bar{v}_a(t)$ represent mean airflow velocity and $n(t)$ be some random process, then the model can be described as
\begin{equation}\label{eq:filament}
    \dot{x}_f(t) = \bar{v}_a(t) + n(t).
\end{equation}
Without loss of generality, we shall regard the start time of our experiment as $t=0$. From (\ref{eq:filament}), we have
\begin{equation}\label{eq:filamentIntegrated}
    x_f(t) = \int_{0}^{t} \bar{v}_a(\tau) d\tau + \int_{0}^{t} n(\tau) d\tau + x_s(0).
\end{equation}
$x_s(0)$ denotes the real position of the odor source at $t=0$.
\begin{assumption}
We assume the presence of a single, stationary odor source. Thus, $x_s(t) = x_s(0)$.
\end{assumption}
Implications from remark \ref{sensorDiscreteInstant} require (\ref{eq:filamentIntegrated}) to be implemented at $t=t_k$ instants. Hence,
\begin{align}
    x_f(t_k) &= \sum_{m=0}^{t} \bar{v}_a(\tau_m) \Delta t + \sum_{m=0}^{t} n(\tau_m) \Delta t + x_s(t_k), \\
    x_f(t_k) &= x_s(t_k) + \bar{v}_a^\star(t_k) + w^\star (t_k).\label{eq:xf}
\end{align}
In (\ref{eq:xf}), $\sum_{m=0}^{t} \bar{v}_a(\tau_m) \Delta t=\bar{v}_a^\star(t_k)$ and $\sum_{m=0}^{t} n(\tau_m) \Delta t=w^\star (t_k)$.
\begin{remark}
In (\ref{eq:xf}), the accumulated average of $\bar{v}_a^\star(t_k)$ and $w^\star (t_k)$ can also be considered $\forall$ possible filament releasing time.
\end{remark}
From (\ref{eq:xf}),
\begin{equation}
 x_f(t_k) - \bar{v}_a^\star(t_k) = x_s(t_k) + w^\star (t_k).\label{eq:xf2}
\end{equation}
The above relationship, (\ref{eq:xf2}) can be viewed as the information about $x_s(t_k)$ with some noise $w^\star (t_k)$. Hence,
\begin{equation}
q_i(t_k) = x_s(t_k) + w^\star (t_k).\label{eq:qi}
\end{equation}
Therefore, $\psi$ in (\ref{eq:xpp}) can now be constructed from (\ref{oscillation centre}) \& (\ref{eq:qi}).

\subsection{Path Planning}
Since, detection of information of interest is tied to the threshold value defined for the sensors, the next state is updated taking this threshold value into account. Thus, the blueprints of path planning can be described in terms of three types of behavior.
\begin{enumerate}
\item Surging: If the $i^{th}$ agent receives data well above threshold, we say that some clues about the location of the source has been detected. If the predicted position of the source at $t=t_k$ as seen by $i^{th}$ agent be given as $x_{s_i}(t_k)$, then the next state of the agent is given mathematically as
\begin{equation} \label{surging}
x_i(t_{k+1}) = x_{s_i}(t_k).
\end{equation}
\item Casting: If the $i^{th}$ agent fails to detect information at any particular instant, then the next state is obtained using the following relation.
\begin{equation}\label{casting}
x_i(t_{k+1}) = \frac{\| x_i(t_k) - x_{s_i}(t_k)\|}{2} + x_{s_i}(t_k).
\end{equation}
\item Search and exploration: If all the agents fail to detect odor clues for a time segment $[t_k, t_{k+l}]>\delta_0$ for some $l \in \mathbb{N}$ and $\delta_0 \in \mathbb{R}^+$ being the time interval for which no clues are detected or some constraint on wait time placed at the start of the experiment, then the next state is updated as
\begin{equation}\label{searching}
x_i(t_{k+1}) = x_{s_i}(t_k) + \digamma_\sigma^\phi.
\end{equation}
In (\ref{searching}), $\digamma_\sigma^\phi$ is some random parameter with $\sigma$ as its standard deviation and $\phi$ as its mean.
\end{enumerate}

\subsection{Distributed Control}
In the control layer, we design a robust and powerful controller on the paradigms of sliding mode. It is worthy to mention that based on instantaneous sensing and swarm information, at different times, each agent can take up the role of a virtual leader whose opinion needs to be kept by other agents. $\psi$ from (\ref{eq:xpp}) has been provided to the controller as the reference to be tracked. The tracking error is formulated as
\begin{equation}\label{eq:error}
e_i(t) = x_i(t) - \psi(t_k) \hspace{1mm}; \hspace{2mm} t \in [t_k, t_{k+1}[.
\end{equation}
In terms of graph theory, we can reformulate the error variable as
\begin{equation}\label{eq:topological error}
\epsilon_i(t) = (\mathcal{L}+\mathcal{B})e_i(t) =(\mathcal{L}+\mathcal{B}) (x_i(t) - \psi(t_k)).
\end{equation}
From this point onward, we shall denote $\mathcal{L}+\mathcal{B}$ as $\mathcal{H}$. Next, we formulate the sliding manifold
\begin{equation}\label{surface}
s_i(t) = \lambda_1 \tanh(\lambda_2 \epsilon_i(t)),
\end{equation}
which is a nonlinear sliding manifold offering faster reachability to the surface. $\lambda_1 \in \mathbb{R}^+$ represents the speed of convergence to the surface, and $\lambda_2 \in \mathbb{R}^+$ denotes the slope of the nonlinear sliding manifold. These are coefficient weighting parameters that affect the system performance. The forcing function has been taken as
\begin{equation}\label{sdot}
\dot{s}_i(t) = -\mu \sinh^{-1}(m + w|s_i(t)|) sign(s_i(t)).
\end{equation}
In (\ref{sdot}), $m$ is a small offset such that the argument of $\sinh^{-1}$ function remains non zero and $w$ is the gain of the controller. The parameter $\mu$ facilitates additional gain tuning. In general, $m<<w$. This novel reaching law contains a nonlinear gain and provides faster convergence towards the manifold. Moreover, this reaching law is smooth and chattering free, which is highly desirable in mechatronic systems to ensure safe operation.
\begin{theorem}\label{th:usm}
Given the dynamics of MAS {\normalfont(\ref{eq:dynamics})} connected in a directed topology, error candidates {\normalfont(\ref{eq:error}, \ref{eq:topological error})} and the sliding manifold {\normalfont(\ref{surface})}, the stabilizing control law that ensures accurate reference tracking under consensus can be described as
\begin{align}\label{eq:usm}
    u_{\textsc{SM}_i}(t) &= -\big\{(\Lambda \mathcal{H})^{-1} \mu \sinh^{-1}(m + w|s_i(t)|)sign(s_i(t))\Gamma^{-1} \nonumber \\
    &+ (f(x_i(t)) - \dot{\psi}(t_k)) \big\}
\end{align}
where $\Lambda = \lambda_1 \lambda_2$, $\Gamma=1-\tanh^2(\lambda_2\epsilon_i(t))$, $w > \sup_{t\geq0} \{\|\varsigma_i\|\}$ \& $\mu > \sup \{ \|\Lambda \mathcal{H} \varsigma_i \Gamma\| \}$.
\begin{remark}
As mentioned earlier, $\lambda_1, \lambda_2 \in \mathbb{R}^+$. This ensures $\Lambda \neq 0$ and hence its non singularity. The argument of $\tanh$ is always finite and satisfies $\lambda_2\epsilon_i(t)\neq \pi \iota (\kappa + 1/2)$ for $\kappa \in \mathbb{Z}$, thus $\Gamma$ is also invertible. Moreover the non singularity of $\mathcal{H}$ can be established directly if the digraph contains a spanning tree with leader agent as a root.
\end{remark}
\end{theorem}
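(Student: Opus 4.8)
The plan is to treat this as a standard sliding-mode reachability result in three movements: first show that the proposed law forces the closed-loop sliding variable to obey the prescribed reaching law \eqref{sdot} up to the matched disturbance, then run a Lyapunov argument proving the manifold $s_i=0$ is attractive, and finally read off consensus and reference tracking from the algebraic structure of the manifold on $s_i=0$.

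First I would differentiate \eqref{surface} along the trajectories. The chain rule on $s_i=\lambda_1\tanh(\lambda_2\epsilon_i)$ gives $\dot s_i=\lambda_1\lambda_2\,[1-\tanh^2(\lambda_2\epsilon_i)]\,\dot\epsilon_i=\Lambda\Gamma\dot\epsilon_i$, with $\Lambda$ and $\Gamma$ exactly as defined in the statement. From \eqref{eq:topological error}, $\dot\epsilon_i=\mathcal{H}\dot e_i=\mathcal{H}\bigl(\dot x_i-\dot\psi(t_k)\bigr)$, and substituting the dynamics \eqref{eq:dynamics} yields $\dot s_i=\Lambda\Gamma\mathcal{H}\bigl(f(x_i)+u_{\textsc{SM}_i}+\varsigma_i-\dot\psi(t_k)\bigr)$. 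Inserting the control law \eqref{eq:usm}, the summand $f(x_i)-\dot\psi(t_k)$ is cancelled by the corresponding part of $u_{\textsc{SM}_i}$, while the leading factor $\Lambda\Gamma\mathcal{H}$ annihilates the $(\Lambda\mathcal{H})^{-1}(\cdots)\Gamma^{-1}$ structure of the discontinuous term, leaving precisely $\dot s_i=-\mu\sinh^{-1}(m+w|s_i|)\,\mathrm{sign}(s_i)+\Lambda\Gamma\mathcal{H}\varsigma_i$. This recovers the reaching law \eqref{sdot} plus a residual driven by the matched disturbance $\varsigma_i$.

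To prove attractiveness I would take $V_i=\tfrac12 s_i^T s_i$ and differentiate to obtain $\dot V_i=s_i^T\dot s_i=-\mu\sinh^{-1}(m+w|s_i|)\,|s_i|+s_i^T\Lambda\Gamma\mathcal{H}\varsigma_i$, then bound the cross term by $|s_i|\,\|\Lambda\Gamma\mathcal{H}\varsigma_i\|$. The gain conditions $w>\sup\{\|\varsigma_i\|\}$ and $\mu>\sup\{\|\Lambda\mathcal{H}\varsigma_i\Gamma\|\}$, together with the strict positivity and monotonicity of $\sinh^{-1}$ (so that $\sinh^{-1}(m+w|s_i|)\ge\sinh^{-1}(m)>0$), make the negative term dominate, giving $s_i^T\dot s_i\le-\eta\|s_i\|$ for some $\eta>0$. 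This is exactly the reachability condition of Section II, and finite-time arrival onto the manifold follows from the usual comparison argument. Once the trajectories remain on $s_i=0$, the manifold definition forces $\tanh(\lambda_2\epsilon_i)=0$, i.e. $\epsilon_i=0$; since the digraph carries a spanning tree rooted at the (virtual) leader, $\mathcal{H}=\mathcal{L}+\mathcal{B}$ is nonsingular, so $\epsilon_i=\mathcal{H}e_i=0$ yields $e_i=x_i-\psi(t_k)=0$. This simultaneously delivers $x_i=x_j$ (consensus) and $x_i=\psi$ (tracking of the predicted source location), closing the argument.

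I expect the main obstacle to be the reaching step. The delicate point is the cancellation in the previous paragraph: $\Lambda\mathcal{H}$ and the diagonal factor $\Gamma$ are matrix (or elementwise) objects, and the clean collapse $\Lambda\Gamma\mathcal{H}(\Lambda\mathcal{H})^{-1}(\cdots)\Gamma^{-1}\mapsto(\cdots)$ relies on these factors commuting, which must be justified componentwise rather than taken for granted. Beyond that, one must verify that $\mu\sinh^{-1}(m+w|s_i|)$ genuinely dominates $\|\Lambda\Gamma\mathcal{H}\varsigma_i\|$ \emph{uniformly}; this is straightforward away from the surface but becomes borderline in the region where $|s_i|$ is small and $\sinh^{-1}(m+w|s_i|)\approx\sinh^{-1}(m)$ is tiny, so that in the presence of the residual disturbance one can in general only guarantee convergence to a thin boundary layer of radius $\theta$ rather than to $s_i=0$ exactly — consistent with the relaxed tracking requirement $\lim_{t\to\infty}\|x_i-x_s\|\le\theta$ of the problem formulation.
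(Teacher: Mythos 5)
Your first movement is exactly the paper's proof of this theorem: differentiate the manifold (\ref{surface}) to get $\dot s_i=\Lambda\mathcal{H}(\dot x_i-\dot\psi(t_k))\Gamma$, substitute the dynamics (\ref{eq:dynamics}), equate to the reaching law (\ref{sdot}), and solve for $u_{\textsc{SM}_i}$. The paper stops there; it does not prove attractiveness or consensus inside this theorem, deferring the Lyapunov/reachability argument to a separate second theorem. So for the statement at hand your derivation is the same route, and the rest of your proposal is additional material. That additional material is nevertheless worth commenting on, because it both completes and critiques what the paper does elsewhere.

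Two of your side remarks identify genuine weaknesses that the paper glosses over. First, the ordering issue: the paper writes $\dot s_i=\Lambda\mathcal{H}(\cdots)\Gamma$ with $\Gamma$ multiplying on the right and then cancels it against a $\Gamma^{-1}$ placed in the middle of the control law (\ref{eq:usm}); this collapse is only automatic when the per-agent state is scalar (so that $\Gamma$ and the $\sinh^{-1}$ term are scalars and everything commutes), and the paper never says so --- your insistence on justifying the commutation componentwise is the right instinct. Second, and more substantively, your observation about the gain condition is a real gap in the paper's companion stability proof: the claimed reachability constant $\eta=\mu\sinh^{-1}(m+w|s_i|)-\Lambda\mathcal{H}\varsigma_i\Gamma$ is asserted positive from $\mu>\sup\{\|\Lambda\mathcal{H}\varsigma_i\Gamma\|\}$ alone, but near the surface $\sinh^{-1}(m+w|s_i|)\approx\sinh^{-1}(m)\ll 1$ (the paper takes $m=10^{-3}$), so the negative term need not dominate and one can only conclude convergence to a boundary layer of the order you describe, consistent with the accuracy parameter $\theta$ in the problem formulation. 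Your final step --- on $s_i=0$, $\tanh(\lambda_2\epsilon_i)=0$ forces $\epsilon_i=0$ and nonsingularity of $\mathcal{H}$ gives $e_i=0$, hence consensus and tracking --- is correct and is the link from the sliding condition to the actual control objective that the paper leaves implicit. In short: your proof subsumes the paper's, and the caveats you raise are not pedantry but point at steps the paper has not actually secured.
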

\begin{proof}
From (\ref{eq:topological error}) and (\ref{surface}), we can write
\begin{align}
    \dot{s}_i(t) &= \lambda_1\{ \lambda_2 \dot{\epsilon}_i(t) (1 - \tanh^2(\lambda_2\epsilon_i(t)))\} \\
    &=\lambda_1 \lambda_2 \dot{\epsilon}_i(t) - \lambda_1\lambda_2 \dot{\epsilon}_i(t)\tanh^2(\lambda_2\epsilon_i(t))\\
    &=\lambda_1 \lambda_2 \dot{\epsilon}_i(t)\{1-\tanh^2(\lambda_2\epsilon_i(t))\}\\
    &=\Lambda \mathcal{H}(\dot{x}_i(t) - \dot{\psi}(t_k)) \Gamma \label{sdotIn}
\end{align}  
with $\Lambda$ \& $\Gamma$ as defined in Theorem \ref{th:usm}. From (\ref{eq:dynamics}), (\ref{sdotIn}) can be further simplified as
\begin{align}
    \dot{s}_i(t) &=\Lambda \mathcal{H}(f(x_i(t)) + u_{\textsc{SM}_i}(t) + \varsigma_i - \dot{\psi}(t_k)) \Gamma.
\end{align}
Using (\ref{sdot}), the control that brings the state trajectories on to the sliding manifold can now be written as
\begin{align}
    u_{\textsc{SM}_i}(t) &= -\big\{(\Lambda \mathcal{H})^{-1} \mu \sinh^{-1}(m + w|s_i(t)|)sign(s_i(t))\Gamma^{-1} \nonumber \\
    &+ (f(x_i(t)) - \dot{\psi}(t_k))\big\}.
\end{align}
This concludes the proof.
\end{proof}
\begin{remark}
The control {\normalfont (\ref{eq:usm})} can be practically implemented as it does not contain the uncertainty term.
\end{remark}
It is crucial to analyze the necessary and sufficient conditions for the existence of sliding mode when control protocol (\ref{eq:usm}) is used. We regard the system to be in sliding mode if for any time $t_1 \in [0, \infty[$, system trajectories are brought upon the manifold $s_i(t)=0$ and are constrained there for all time thereafter, i.e., for $t\geq t_1$, sliding motion occurs.
\begin{theorem}
Consider the system described by {\normalfont(\ref{eq:dynamics})}, error candidates {\normalfont(\ref{eq:error}, \ref{eq:topological error})}, sliding manifold {\normalfont(\ref{surface})} and the control protocol {\normalfont(\ref{eq:usm})}. Sliding mode is said to exist in vicinity of sliding manifold, if the manifold is attractive, i.e., trajectories emanating outside it continuously decrease towards it. Stating alternatively, reachability to the surface is ensured for some reachability constant $\eta>0$. Moreover, stability can be guaranteed in the sense of Lyapunov if gain $\mu$ is designed as $\mu > \sup \{ \|\Lambda \mathcal{H} \varsigma_i \Gamma\| \}$.
\end{theorem}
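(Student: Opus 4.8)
The plan is to certify attractiveness of the manifold by a Lyapunov argument built on the closed-loop sliding dynamics. Re-inserting the control (\ref{eq:usm}) into the expression for $\dot{s}_i$ obtained in the proof of Theorem \ref{th:usm} cancels the matched terms $f(x_i)-\dot{\psi}(t_k)$, collapses the compensated part to the pure reaching law (\ref{sdot}), and leaves only the unmatched disturbance channel, giving
\begin{equation*}
\dot{s}_i(t) = -\mu\sinh^{-1}(m + w|s_i(t)|)\,sign(s_i(t)) + \Lambda\mathcal{H}\varsigma_i\Gamma .
\end{equation*}
This perturbed reaching law is the object I would analyze; note that the disturbance enters exactly through the term whose norm appears in the gain condition.

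First I would take the quadratic candidate $V_i = \tfrac{1}{2}s_i^T s_i$, which is positive definite, radially unbounded, and vanishes only on the manifold. Differentiating along trajectories gives $\dot{V}_i = s_i^T\dot{s}_i$. Substituting the perturbed reaching law and using $s_i^T sign(s_i)=\|s_i\|_1\geq\|s_i\|$ for the nonlinear term, together with the Cauchy--Schwarz bound $s_i^T\Lambda\mathcal{H}\varsigma_i\Gamma\leq\|s_i\|\,\|\Lambda\mathcal{H}\varsigma_i\Gamma\|$ for the disturbance term, I obtain
\begin{equation*}
\dot{V}_i \leq -\|s_i\|\bigl(\mu\sinh^{-1}(m + w|s_i|)-\|\Lambda\mathcal{H}\varsigma_i\Gamma\|\bigr).
\end{equation*}

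Next I would establish that the bracketed factor is a strictly positive constant. Since $\sinh^{-1}$ is odd and strictly increasing and $m>0$, the nonlinear gain obeys $\sinh^{-1}(m+w|s_i|)\geq\sinh^{-1}(m)>0$, so the factor is bounded below by $\mu\sinh^{-1}(m)-\sup_{t\geq0}\{\|\Lambda\mathcal{H}\varsigma_i\Gamma\|\}$. Designing $\mu$ so that this quantity is positive and naming it $\eta$ yields $\dot{V}_i\leq-\eta\|s_i\|<0$ for every $s_i\neq0$, which is precisely the reachability inequality $s_i^T\dot{s}_i\leq-\eta\|s_i\|$; hence the manifold is attractive, $s_i=0$ is reached (in fact in finite time, since $\dot{V}_i\leq-\eta\sqrt{2V_i}$), and Lyapunov stability of the sliding motion follows from $V_i>0$, $V_i(0)=0$ and $\dot{V}_i<0$ off the surface. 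For the necessity (two-sided) direction I would check that the odd sign structure forces $\dot{s}_i<0$ as $s_i\to0^+$ and $\dot{s}_i>0$ as $s_i\to0^-$ under the same gain condition, so that trajectories approach from both sides.

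The main obstacle I anticipate is making the reachability constant uniform down to the surface. The disturbance rejection is weakest exactly where reachability must still hold: as $s_i\to0$ the gain $\sinh^{-1}(m+w|s_i|)$ decays to its floor $\sinh^{-1}(m)$ while $\Gamma\to1$, so the disturbance term---attenuated far from the surface---attains its worst case there. The honest requirement for $\eta>0$ is therefore that $\mu\sinh^{-1}(m)$---not merely $\mu$---dominate $\sup\{\|\Lambda\mathcal{H}\varsigma_i\Gamma\|\}$; I would state $\eta=\mu\sinh^{-1}(m)-\sup\{\|\Lambda\mathcal{H}\varsigma_i\Gamma\|\}$ explicitly rather than leaving the $\sinh^{-1}(m)$ factor implicit. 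A secondary care point is fixing the vector-versus-scalar convention for $|s_i|$, $sign(s_i)$ and $\Gamma$ (componentwise, with $s_i^T sign(s_i)=\|s_i\|_1$ and $\Gamma$ a diagonal invertible factor) so that every inequality above is rigorous.
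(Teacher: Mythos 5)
Your proposal follows essentially the same route as the paper: the quadratic candidate $V_i=\tfrac{1}{2}s_i^2$, substitution of the control law so that $\dot s_i$ collapses to the perturbed reaching law $-\mu\sinh^{-1}(m+w|s_i|)\,sign(s_i)+\Lambda\mathcal{H}\varsigma_i\Gamma$, and domination of the disturbance term by the nonlinear gain to get $\dot V_i\leq-\eta\|s_i\|$. The one place you go beyond the paper is worth keeping: the paper defines $\eta=\mu\sinh^{-1}(m+w|s_i|)-\Lambda\mathcal{H}\varsigma_i\Gamma$ (a state-dependent quantity, not a constant) and asserts its positivity from $\mu>\sup\{\|\Lambda\mathcal{H}\varsigma_i\Gamma\|\}$ alone, which is insufficient near the surface where $\sinh^{-1}(m+w|s_i|)$ decays to $\sinh^{-1}(m)\ll1$ (the simulations use $m=10^{-3}$); your sharper requirement $\mu\sinh^{-1}(m)>\sup\{\|\Lambda\mathcal{H}\varsigma_i\Gamma\|\}$, with the uniform constant $\eta=\mu\sinh^{-1}(m)-\sup\{\|\Lambda\mathcal{H}\varsigma_i\Gamma\|\}$, is what the reachability inequality actually needs, and your finite-time remark via $\dot V_i\leq-\eta\sqrt{2V_i}$ is a correct bonus the paper does not state.
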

\begin{proof}
Let us take into account, a Lyapunov function candidate
\begin{equation}\label{lyapunov}
    V_i = 0.5s^2_i.
\end{equation}
Taking derivative of (\ref{lyapunov}) along system trajectories yield
\begin{align}
    \dot{V}_i &= s_i \dot{s}_i\\
    &=s_i \big \{\Lambda \mathcal{H}(f(x_i(t)) + u_{\textsc{SM}_i}(t) + \varsigma_i - \dot{\psi}(t_k)) \Gamma \big\} \label{ssdot}.
\end{align}
Substituting the control protocol (\ref{eq:usm}) in (\ref{ssdot}), we have
\begin{align}
\dot{V}_i &=s_i \big(-\mu \sinh^{-1}(m+w|s_i|)sign(s_i) + \Lambda \mathcal{H} \varsigma_i \Gamma \big) \nonumber \\
&= -\mu \sinh^{-1}(m+w|s_i|)\|s_i\| + \Lambda \mathcal{H} \varsigma_i \Gamma \|s_i\| \nonumber \\
&=\big \{-\mu \sinh^{-1}(m+w|s_i|) + \Lambda \mathcal{H} \varsigma_i \Gamma \big \} \|s_i\| \nonumber \\
&=-\eta \|s_i\|,
\label{ssdotfinal}
\end{align}
where $\eta = \mu \sinh^{-1}(m+w|s_i|) - \Lambda \mathcal{H} \varsigma_i \Gamma > 0$ is called reachability constant. For $\mu > \sup \{ \|\Lambda \mathcal{H} \varsigma_i \Gamma\| \}$, we have
\begin{equation}
    \dot{V}_i < 0.
\end{equation}
Thus, the derivative of Lyapunov function candidate is negative definite confirming stability in the sense of Lyapunov.\\
Since, $\mu>0$, $\|s_i\|>0$ and $\sinh^{-1}(\cdot)>0$ due to the nature of its arguments. Therefore, (\ref{ssdotfinal}) and (\ref{sdot}) together provide implications that $\forall s_i(0)$, $s_i \dot{s}_i<0$ and the surface is globally attractive. This ends the proof.
\end{proof}

\section{Results and discussions}
Interaction topology of the agents represented as a digraph has been shown here in figure 1. The associated graph matrices have been described below. The computer simulation has been performed assuming that agent 1 appears as virtual leader to all other agents, making the topology fixed and directed for this study. It should be noted that, the theory developed so far can be extended to the case of switching topologies and shall be dealt in future.
\begin{figure}[h!]
\centering
  \includegraphics[scale=0.8]{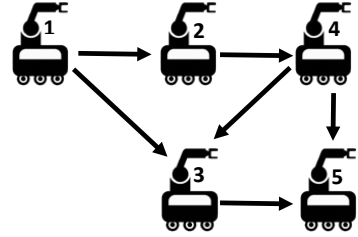}
  \caption{Topology in which agents are connected}
\end{figure}
\begin{equation}
\mathcal{A}=
\begin{bmatrix}
0 & 0 & 1 & 0\\
0 & 0 & 0 & 0\\
0 & 1 & 0 & 0\\
0 & 0 & 1 & 0
\end{bmatrix}
,\hspace{1.5mm}
\mathcal{B}=
\begin{bmatrix}
1 & 0 & 0 & 0\\
0 & 1 & 0 & 0\\
0 & 0 & 0 & 0\\
0 & 0 & 0 & 0
\end{bmatrix}
,\hspace{1.5mm}
\mathcal{D}=
\begin{bmatrix}
1 & 0 & 0 & 0\\
0 & 0 & 0 & 0\\
0 & 0 & 1 & 0\\
0 & 0 & 0 & 1
\end{bmatrix},
\end{equation}
\begin{equation}
\mathcal{L}=\mathcal{D}-\mathcal{A}=
\begin{bmatrix}
1 & 0 & -1 & 0\\
0 & 0 &  0 & 0\\
0 & -1 & 1 & 0\\
0 & 0 & -1 & 1
\end{bmatrix},
\mathcal{L+B}=
\begin{bmatrix}
2 & 0 & -1 & 0\\
0 & 1 & 0 & 0\\
0 & -1 & 1 & 0\\
0 & 0 & -1 & 1
\end{bmatrix}
\end{equation}
Agents have the following dynamics.
\begin{align}
\dot{x}_1 &= 0.1 \sin(x_1) + \cos (2 \pi t) + u_{\textsc{SM}_1}(t) + \varsigma_1,\\
\dot{x}_2 &= 0.1 \sin(x_2) + \cos (2 \pi t) + u_{\textsc{SM}_2}(t) + \varsigma_2,\\
\dot{x}_3 &= 0.1 \sin(x_3) + \cos (2 \pi t) + u_{\textsc{SM}_3}(t) + \varsigma_3,\\
\dot{x}_4 &= 0.1 \sin(x_4) + \cos (2 \pi t) + u_{\textsc{SM}_4} (t)+ \varsigma_4, \\
\dot{x}_5 &= 0.1 \sin(x_5) + \cos (2 \pi t) + u_{\textsc{SM}_5} (t)+ \varsigma_5.
\end{align}
In this study, advection model given in \cite{6640740} has been used to simulate the plume with both additive and multiplicative disturbances.
The initial conditions for simulation are taken to be large values, i.e., far away from the equilibrium point. Time varying disturbance has been taken as $\varsigma_i = 0.3 \sin (\pi ^2 t^2)$, accuracy parameter $\theta = 0.001$ and maximum mean airflow velocity $\bar{v}_{a_{max}} = 1$ m/s. Other key design parameters are mentioned in table 1.
\begin{figure*}[h!]
  \includegraphics[width=\textwidth]{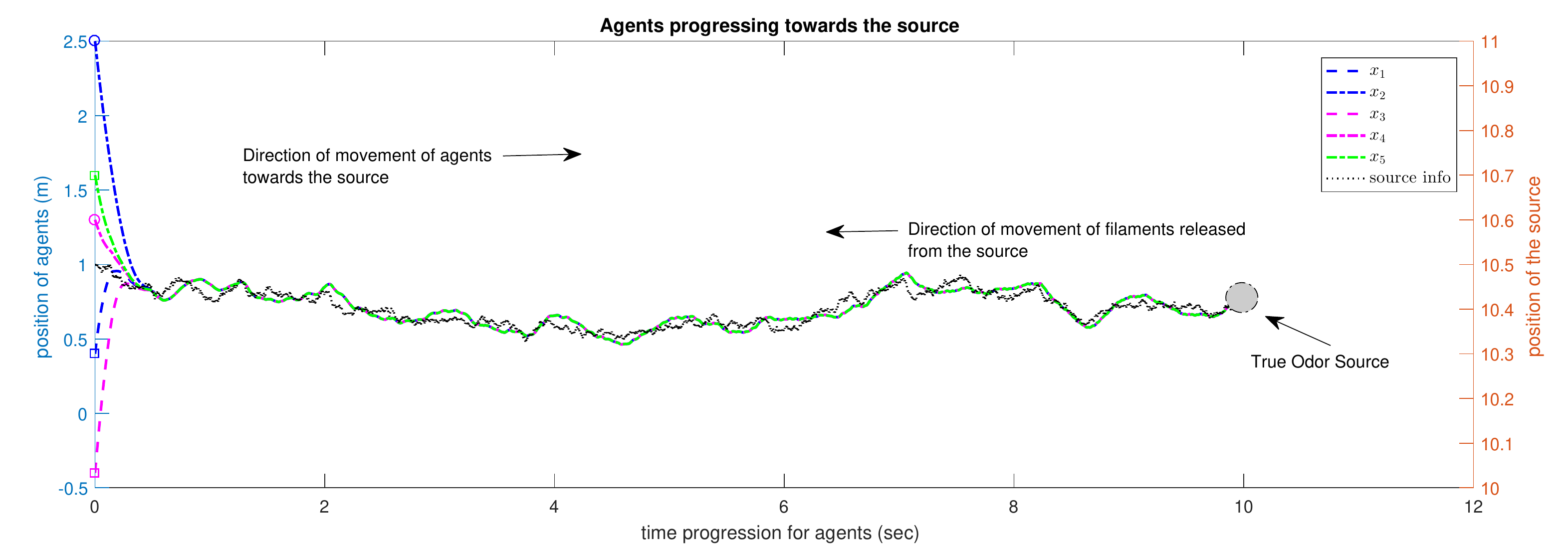}
  \caption{Agents in consensus to locate source of odor}
 \includegraphics[width=\textwidth]{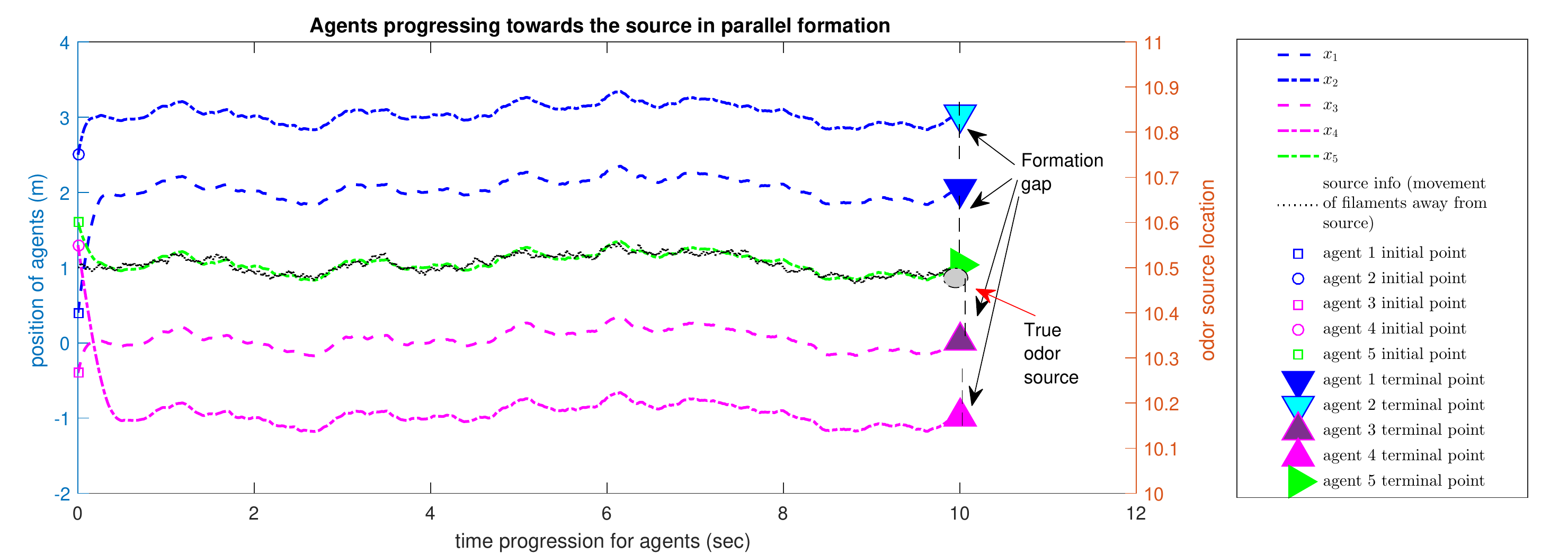}
  \caption{Agents in formation to locate source of odor}
\end{figure*}
\begin{figure}[H]
\includegraphics[width=0.5\textwidth]{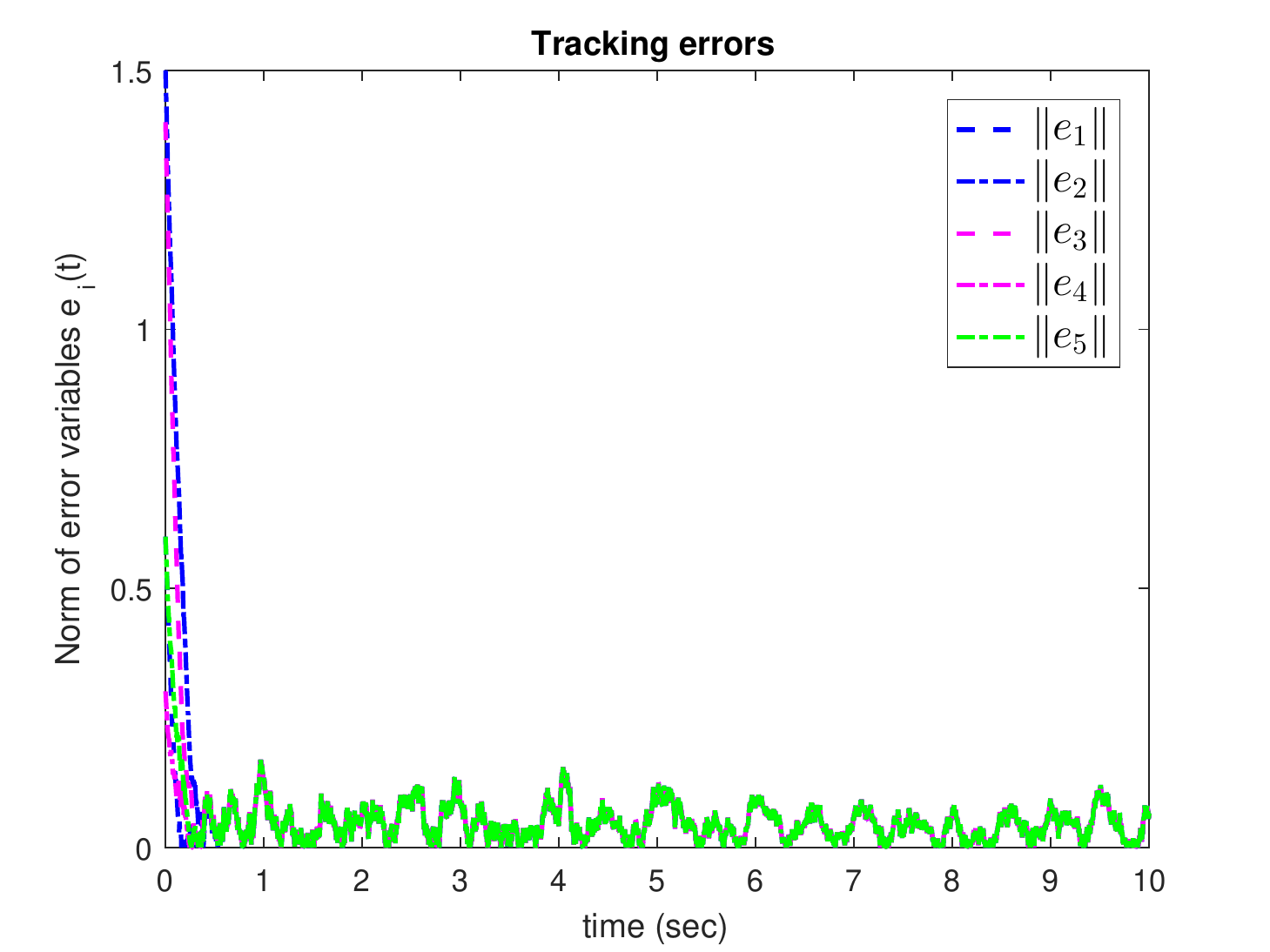}
  \caption{Norm of tracking errors}
\end{figure}
\begin{figure}[H]
  \includegraphics[width=0.5\textwidth]{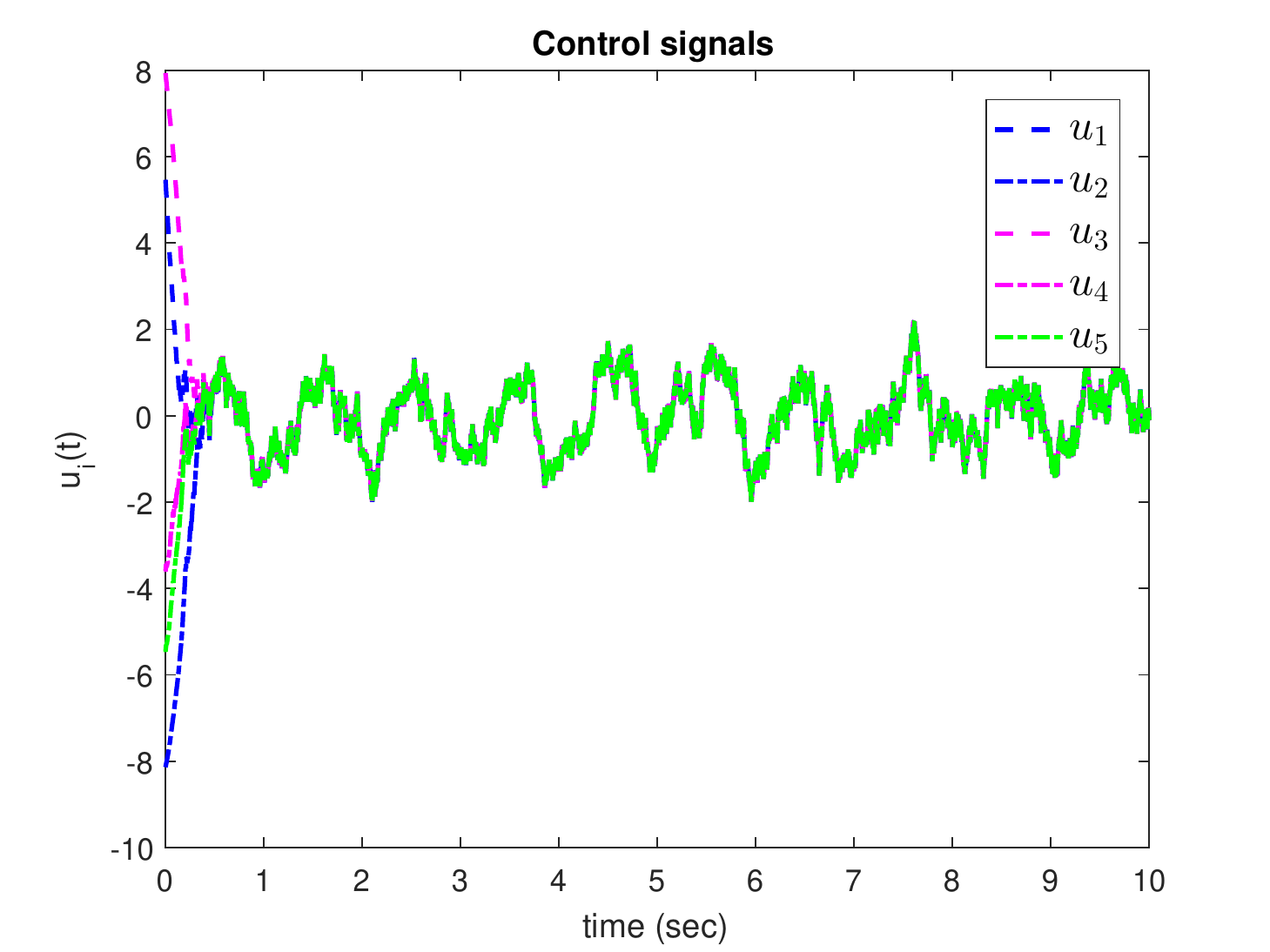}
  \caption{Control signals during consensus}
\end{figure}
\begin{figure}[H]
\includegraphics[width=0.5\textwidth]{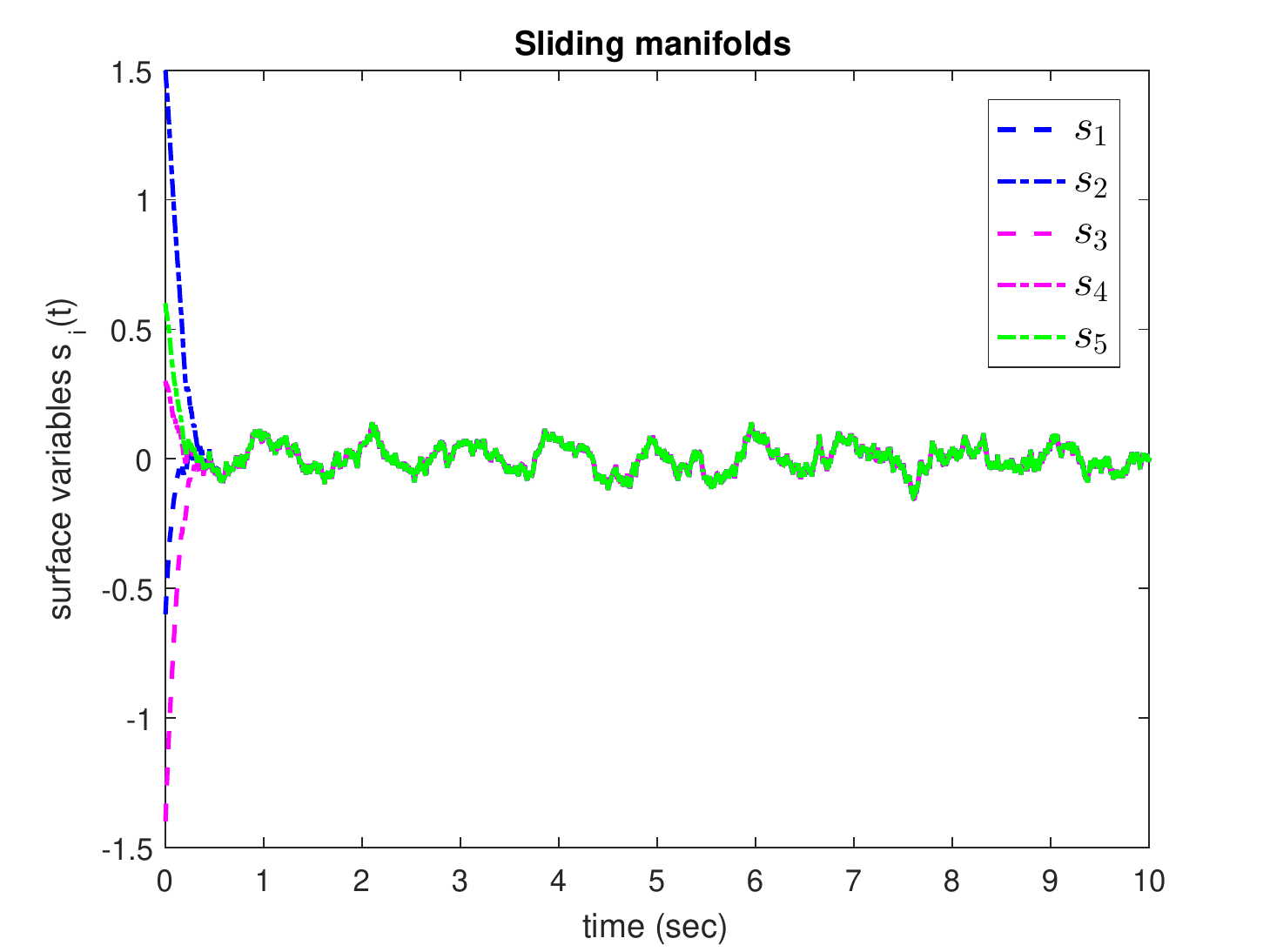}
  \caption{Sliding manifolds during consensus}
\end{figure}
\begin{table}[H]
\caption{Values of the design parameters used in simulation}\label{tb:parameter}
\centering
\begin{tabular}{ ccccccccc } 
\hline
 $c_1$ & $\omega_{max}$ & $\alpha_1$ & $\alpha_2$ & $\lambda_1$ & $\lambda_2$ & $\mu$ & $m$ & $w$\\\hline\\
 $0.5$ & $2$ rad/s & $0.25$ & $0.25$ & $1.774$ & $2.85$ & $5$& $10^{-3}$ & $2$
\end{tabular}
\end{table}
Figure 2 shows agents coming to consensus in finite time to locate the source of odor and figure 3 shows agents moving in parallel formation to locate the odor source. Norm of the tracking errors has been depicted in figure 4. It is evident that the magnitude of error is very small. Plot of control signals during consensus has been shown in figure 5 and the plot of sliding manifolds has been shown in figure 6.

\section{Concluding remarks}
The problem of odor source localization by MAS has been dealt with in a hierarchical manner in this work. The problem translates into a cooperative control problem wherein agents are driven towards consensus to locate the true odor source in finite time. Through computer simulations, it has been confirmed that the proposed strategy is faster and provides accurate tracking even in the presence of time varying disturbances.

\bibliographystyle{IEEEtran}
\bibliography{references}
\end{document}